\documentclass[runningheads]{llncs}

\usepackage{amsmath,amssymb,mathtools}
\usepackage{booktabs}
\usepackage{microtype}
\usepackage[hidelinks]{hyperref}

\newcommand{\cost}{\operatorname{cost}}
\newcommand{\imb}{\operatorname{imb}}
\newcommand{\DN}{\textnormal{\textsc{Distinct-N3DM}}}
\newcommand{\UBM}{\textnormal{\textsc{Unrestricted Balanced Mobiles}}}
\newcommand{\abs}[1]{\left|#1\right|}

\title{Strong NP-Completeness of Unrestricted Balanced Mobiles}
\titlerunning{Strong NP-Completeness of Unrestricted Balanced Mobiles}
\author{Andrei Popa and Alexandru Popa}
\authorrunning{A. Popa and A. Popa}
\institute{Faculty of Mathematics and Computer Science, University of Bucharest}

\begin{document}
\maketitle

\begin{abstract}
A mobile is a rooted full binary tree whose leaves carry positive integer
weights.  The imbalance of an internal node is the absolute difference
between the total weights of its two child subtrees, and the cost of the
mobile is the sum of these imbalances.  In the unrestricted
\emph{Balanced Mobiles} problem, only the multiset of leaf weights is given:
both the tree topology and the placement of the weights must be chosen so as
to minimize the cost.  The computational complexity of this unrestricted
variant has remained open, although the variant with a prescribed topology
is strongly NP-hard.  We close this gap by proving that the decision version
of unrestricted Balanced Mobiles is strongly NP-complete.  Our reduction
from Numerical 3-Dimensional Matching with Distinct Integers uses three
widely separated numerical scales.  Tight telescoping bounds force every
threshold-achieving mobile into a canonical hierarchy, after which
pairwise distinctness of the source integers collapses the hierarchy to
single triples from which a valid numerical matching can be recovered.
\keywords{Balanced mobiles \and Colless index \and computational complexity
\and strong NP-completeness \and numerical three-dimensional matching.}
\end{abstract}

\section{Introduction}\label{sec:intro}

Quantifying how balanced a rooted tree  is a classical problem in
phylogenetics and in the combinatorial study of tree shapes.  Two of the
best-known measures are the Sackin index~\cite{Sackin1972}, based on leaf
depths, and the Colless index~\cite{Colless1982}, which for a rooted binary
tree sums, over all internal nodes, the absolute difference between the
numbers of leaves in the two child subtrees.  The extremal and structural
properties of these indices have generated a substantial literature; see,
for example, the characterization of Colless-minimal trees by Coronado et
al.~\cite{CoronadoEtAl2020} and the recent survey of Fischer et
al.~\cite{FischerEtAl2023}.  More general balance indices have also been
proposed to accommodate nonbinary trees and nonuniform node sizes
\cite{LemantEtAl2022}.

Balanced mobiles turn this structural measure into an optimization problem.
A \emph{mobile} is a rooted full binary tree whose leaves are assigned
positive integer weights.  At an internal node, the number of descendant
leaves in the Colless index is replaced by the total weight of the descendant
leaves.  Thus the cost of a mobile is the sum, over its internal nodes, of the
absolute difference between the total weights of the two child subtrees.
Given a multiset of weights, the unrestricted \emph{Balanced Mobiles}
problem asks us to choose both a full binary tree and a bijective assignment
of the given weights to its leaves so that this total cost is minimized.  We
write \UBM{} for the associated decision problem, in which a threshold is
part of the input.

Hamoudi, Laplante, and Mantaci~\cite{HamoudiEtAl2015} introduced balanced
mobiles as a weighted extension of the Colless index and studied their
connections with phylogenetic balance and Huffman-like optimization.  They
obtained algorithms for restricted cases and left the complexity of the
general optimization problem open.  A subsequent result of Ardevol
Martinez, Rizzi, and Sikora~\cite{ArdevolEtAl2023} established strong
NP-hardness for a different variant in which the full binary tree topology is
part of the input and only the assignment of weights to leaves is optimized.
Their reduction can exploit the prescribed topology to keep designated
weights in the same local structure.  That mechanism is unavailable in the
unrestricted problem: a candidate solution may reorganize all weights into a
completely different tree.  Proving hardness therefore requires controlling
the topology indirectly through the numerical values of the weights.

\paragraph{Our contribution.}
We resolve the complexity of the unrestricted problem.

\begin{theorem}\label{thm:main}
The decision version of \UBM{} is strongly NP-complete.  Consequently, the
corresponding optimization problem is strongly NP-hard.
\end{theorem}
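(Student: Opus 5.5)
Membership in NP is routine: a certificate is a full binary tree with $n$ leaves together with a bijection from the multiset of weights to its leaves. It has size polynomial in $n$ and in the bit-length of the weights, and computing all subtree sums and the cost takes linear time. For strong NP-hardness I would reduce from \DN{}, which is strongly NP-complete. An instance consists of pairwise distinct positive integers $x_1,\dots,x_m$, $y_1,\dots,y_m$, $z_1,\dots,z_m$ and a target $B$ with $\sum_i(x_i+y_i+z_i)=mB$. The question is whether these integers can be partitioned into $m$ triples, each containing one $x$, one $y$ and one $z$, with every triple summing to $B$. All numbers are bounded by a polynomial in $m$, so it suffices to build weights that are also polynomially bounded and a polynomially bounded threshold $K$.

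The construction uses three widely separated scales $1\ll M_1\ll M_2\ll M_3$, all polynomial in $m$ and $B$. I would take the $x$-weights as $M_3+M_2+x_i$, the $y$-weights as $M_3-M_2+y_i$, and the $z$-weights as something like $2M_3+z_i$. I would add filler weights, for instance complements of the form $4M_3 - B$ or powers of two times a block weight, chosen so that the intended mobile looks like this: each triple forms a subtree of weight about $4M_3+B$, and $m$ such blocks, padded to a power of two by filler blocks of exactly the same weight, combine in a perfectly balanced binary tree whose upper-level imbalance is zero. In the intended solution, a triple $(x,y,z)$ is arranged as $((x,y),z)$. The imbalance at the $(x,y)$ node is $2M_2+(x-y)$. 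The imbalance at the triple node is $|2M_3+x+y-2M_3-z|$, which is small. Then $K$ is set to exactly the cost of this canonical mobile. That cost is $\sum_i (2M_2+x_i-y_i)$ plus small terms that depend only on the multiset and not on the matching, for instance by giving the $z$-weights a large offset so that each absolute value opens with a known sign. This gives the forward direction: a matching yields a mobile of cost exactly $K$.

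The converse is the main obstacle, because the solution may choose any topology. My plan is to prove telescoping lower bounds scale by scale. First, at scale $M_3$, every internal node whose two children have different counts of $M_3$ units pays at least about $M_3$. A counting argument shows that any mobile of cost $\le K$ must pair each $x$ with a $y$ and attach a single $z$, and must build the upper tree only from blocks of equal $M_3$-weight. Second, at scale $M_2$, I would show that every $x$-weight is siblings with exactly one $y$-weight, since otherwise the $\pm M_2$ offsets fail to cancel and the cost exceeds the budget by about $M_2$. Third, at scale $M_1$ and at unit scale, the budget leaves zero slack. So every block must have weight exactly $4M_3+B$; otherwise, balancing two blocks of different weight produces a positive imbalance that the budget cannot absorb. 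Each block then contains one $x$, one $y$ and one $z$ summing to $B$, which is a valid matching.

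The delicate point I expect is ruling out mixed blocks, for example a block holding two $x$'s and merging differently. I would handle this by arguing that the total slack at each scale is strictly smaller than the minimum penalty a deviation incurs at the next larger scale. Pairwise distinctness would be used to exclude degenerate ties, such as an $x$ swapping roles with a filler, or equal values letting a nonstandard hierarchy reach the same cost. Finally, all weights are polynomial in the input, which gives strong NP-hardness, and strong NP-hardness of the optimization version follows at once.
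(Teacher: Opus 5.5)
Your overall plan (three separated scales, telescoping lower bounds, zero slack, distinctness) points in the same direction as the paper. Your loose choices need fixing: add an extra $M_1$ term to the $z$-weights so the triple node costs $M_1+z-x-y$, and make every filler weigh exactly $4M_3+M_1+B$, not $4M_3-B$, so that fillers are inert. With those choices your construction is essentially the paper's; your signed $\pm M_2$ coordinate plays the role of the paper's potential $2d-m$.

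The gap is in soundness, and your second step is false: the $M_2$-scale budget does not force each $x$ to be a sibling of a single $y$. A node joining a pure block of $s$ $x$-leaves to a pure block of $s$ $y$-leaves pays $2sM_2$, exactly the cost of $s$ separate pairs, so the telescoping bound is tight for such blocks too. For example, with $m=2$ the mobile $(((x_1,x_2),(y_1,y_2)),(z_1,z_2))$ has cost exactly $K+|x_1-x_2|+|y_1-y_2|+|z_1-z_2|$, and it pays nothing extra at scales $M_3$, $M_2$ or $M_1$. So making the slack at each scale smaller than the penalty at the next scale cannot rule out mixed blocks, because their only penalty lives at the unit scale. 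Likewise, scale $M_3$ alone only forces equal structural mass on both sides of every node; it does not force $x$--$y$ pairing.

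What is missing is the equality case of the telescoping bound.
\begin{itemize}
\item Tightness forces $|\phi(v_L)|=|\phi(v_R)|$ at every node. This gives sign purity and an exact normal form: every node with a nonzero $M_2$- or $M_1$-difference is a pure-versus-pure junction.
\item These junction costs sum to exactly $K$ whatever their sizes. Hence $\cost=K+\sum_{v\in R}|\Delta r_v|$, where $R$ is the set of remaining residual-only nodes.
\item Zero slack then gives $\Delta r_v=0$ at every node of $R$.
\item A junction of size at least two contains a cherry of two same-type leaves. That cherry is residual-only, so it forces two equal source integers.
\end{itemize}
This is where distinctness enters; it has nothing to do with an $x$ swapping roles with a filler.

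Finally, concluding that every block has weight $4M_3+M_1+B$ needs more than ``balancing two blocks of different weight'', since the children of an upper node may each contain several blocks. You need equal block counts on both sides (from $\Delta m=0$) together with equal residual sums, followed by an induction on the tree.
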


The reduction is from Numerical 3-Dimensional Matching with Distinct
Integers (\DN{}), a strongly NP-hard problem even under the distinctness
restriction; in fact, the distinct version is strongly ASP-complete
\cite{MITHardness2024}.  The main technical difficulty is not creating a
low-cost mobile from a valid matching, but proving the converse when the tree
topology is unconstrained.  We address this by encoding several structural
coordinates at widely separated numerical scales.  Any mobile meeting the
threshold must make a sequence of lower bounds tight, and the equality cases
of those bounds force a canonical local structure.  The full normal-form
argument is developed after the construction, where all required notation is
available.

\paragraph{Organization.}
Section~\ref{sec:problem} defines the problem formally.
Section~\ref{sec:source} introduces and normalizes the source problem, and
Section~\ref{sec:reduction} gives the reduction together with a roadmap of
the soundness argument.  Section~\ref{sec:correctness} proves completeness
and soundness, while Section~\ref{sec:hardness} concludes strong
NP-completeness.  Repetitive second-scale arguments, routine estimates, and
the size calculation are deferred to the appendix.

\section{Problem definition and preliminaries}\label{sec:problem}

\begin{definition}[Mobile and cost]
A \emph{mobile} is a rooted full binary tree $T$ together with a positive integer weight on each leaf.  For a node $v$, let $W(v)$ denote the sum of all leaf weights in the subtree rooted at $v$.  If $v$ is internal with children $v_L,v_R$, define
\[
  \imb(v)=\abs{W(v_L)-W(v_R)}.
\]
The total cost is
\[
  \cost(T)=\sum_{v\text{ internal}}\imb(v).
\]
\end{definition}

\begin{problem}[\UBM]\label{prob:ubm}
\textbf{Input:} Positive integers $w_1,\dots,w_N$ and an integer $K\ge 0$.

\textbf{Question:} Does there exist a rooted full binary tree with $N$ leaves, with the given weights assigned bijectively to its leaves, whose total cost is at most $K$?
\end{problem}

\section{Source problem and normalization}\label{sec:source}

We reduce from Numerical 3-Dimensional Matching with Distinct Integers (\DN{}).  The distinct-integers restriction is known to preserve strong NP-hardness, and MIT Hardness Group et al.~\cite{MITHardness2024} strengthen this to strong ASP-completeness.  We use only strong NP-hardness.

\begin{problem}[Distinct-N3DM]\label{prob:dn3dm}
\textbf{Input:} Three sets
\[
 A^0=\{\alpha_1,\dots,\alpha_n\},\qquad
 B^0=\{\beta_1,\dots,\beta_n\},\qquad
 C^0=\{\gamma_1,\dots,\gamma_n\}
\]
of positive integers, with all $3n$ integers pairwise distinct, and an integer target $t$ satisfying
\[
  \sum_{i=1}^n\alpha_i+\sum_{j=1}^n\beta_j+\sum_{k=1}^n\gamma_k=nt.
\]
\textbf{Question:} Can $A^0\cup B^0\cup C^0$ be partitioned into $n$ triples, each containing exactly one element from each set and each summing to $t$?
\end{problem}

We may assume $n\ge 2$; the finitely many smaller instances can be handled directly.

Let
\[
 U=1+\max\!\left(\{t\}\cup A^0\cup B^0\cup C^0\right),
 \qquad M=3U+1,
\]
and define
\[
 a_i=10M+\alpha_i,\qquad
 b_j=4M+\beta_j,\qquad
 c_k=M+\gamma_k,
\]
with target
\[
 H=15M+t.
\]
Let $A=\{a_1,\dots,a_n\}$ and similarly $B,C$.

\begin{lemma}[Normalization]\label{lem:normalize}
The normalized instance satisfies:
\begin{align}
 a_i+b_j+c_k=H
 &\iff \alpha_i+\beta_j+\gamma_k=t, \label{eq:norm-equiv}\\
 \sum_i a_i+\sum_jb_j+\sum_kc_k&=nH, \label{eq:norm-total}\\
 b_j&>c_k \qquad\text{for all }j,k, \label{eq:bgtc}\\
 a_i&>b_j+c_k \qquad\text{for all }i,j,k, \label{eq:agtbc}\\
 0<a_i,b_j,c_k&<H. \label{eq:lessH}
\end{align}
Moreover, the elements within each of $A,B,C$ remain pairwise distinct.
\end{lemma}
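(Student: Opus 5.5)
This is a direct verification from the definitions. Every original integer and $t$ lies in $(0,U)$, since all are positive integers and $U$ exceeds their maximum. Consequently $M=3U+1>3U$, and $M$ exceeds any sum of at most three original integers.

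For \eqref{eq:norm-equiv} we expand
\[
a_i+b_j+c_k=15M+\alpha_i+\beta_j+\gamma_k .
\]
Since $H=15M+t$, the two sides agree exactly when $\alpha_i+\beta_j+\gamma_k=t$.

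For \eqref{eq:norm-total} we sum the three families:
\[
\sum_i a_i+\sum_j b_j+\sum_k c_k=15nM+\sum_i\alpha_i+\sum_j\beta_j+\sum_k\gamma_k .
\]
The hypothesis of \DN{} says the last three sums total $nt$, so the whole expression equals $n(15M+t)=nH$.

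For the inequalities we compare the $M$-multiples and absorb the offsets using $0<\alpha,\beta,\gamma<U\le M/3$.
\begin{itemize}
\item \eqref{eq:bgtc}: we have $b_j-c_k=3M+\beta_j-\gamma_k>3M-U>0$.
\item \eqref{eq:agtbc}: we have $a_i-b_j-c_k=5M+\alpha_i-\beta_j-\gamma_k>5M-2U>0$.
\item \eqref{eq:lessH}, positivity: all normalized values are positive, because $M>0$ and the offsets are positive.
\item \eqref{eq:lessH}, upper bound: the largest candidate is $a_i$, and $H-a_i=5M+t-\alpha_i>5M-U>0$. Also $b_j<a_i$ and $c_k<b_j$, so all values are below $H$.
\end{itemize}

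Finally, distinctness within each family holds because $x\mapsto 10M+x$ (and similarly the maps adding $4M$ and $M$) is injective, and the original $3n$ integers are pairwise distinct. Note also that elements from different families cannot coincide: $c_k<2M<b_j<5M<a_i$, since every offset is less than $M$.

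None of these steps is a real obstacle. The only point needing care is to keep each offset error (at most $2U$ in absolute value) strictly smaller than the $M$-gap it is compared with. This is guaranteed by $M=3U+1$.
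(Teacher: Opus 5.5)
Your proposal is correct and follows essentially the same route as the paper: you cancel the $15M$ offset, sum using $\sum\alpha+\sum\beta+\sum\gamma=nt$, bound $b_j-c_k>3M-U$ and $a_i-b_j-c_k>5M-2U$, obtain the upper bound from $M=3U+1$ and $t>0$, and note that a fixed additive shift preserves distinctness. Your explicit computation $H-a_i>5M-U$ and the cross-family separation $c_k<2M<b_j<5M<a_i$ simply spell out details the paper leaves implicit.
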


The proof is deferred to Appendix~\ref{app:normalization}.

Two aggregate consequences will be used repeatedly.  From~\eqref{eq:bgtc}, after pairing the $n$ elements of $B$ and $C$ arbitrarily,
\begin{equation}\label{eq:sumbgtc}
 \sum_jb_j>\sum_kc_k.
\end{equation}
Similarly, from~\eqref{eq:agtbc},
\begin{equation}\label{eq:sumagtbc}
 \sum_i a_i>\sum_jb_j+\sum_kc_k.
\end{equation}

\section{The reduction}\label{sec:reduction}

Let $P$ be the least power of two with $P\ge n$, and let
\[
 q=P-n.
\]
Thus $P<2n$ for $n>1$ and $q\ge0$.  Choose
\begin{equation}\label{eq:Qdef}
 Q=100P^3(H+1).
\end{equation}
For each normalized source element create a leaf of the following weight:
\begin{align*}
 C_k^*&=Q^3+c_k,\\
 B_j^*&=Q^3+Q^2+b_j,\\
 A_i^*&=2Q^3+Q^2+Q+a_i.
\end{align*}
In addition, create $q$ padding leaves, each of weight
\[
 T^*=4Q^3+2Q^2+Q+H.
\]
The threshold is
\begin{equation}\label{eq:Ldef}
 L=nQ^2+nQ+\sum_i a_i-2\sum_k c_k.
\end{equation}
By~\eqref{eq:sumbgtc}--\eqref{eq:sumagtbc}, $L>0$.

It is useful to associate to each leaf four additive coordinates $(m,d,e,r)$ by writing its constructed weight as
\[
 Q^3m+Q^2d+Qe+r.
\]
These are defined by the construction rather than inferred from an arbitrary base-$Q$ expansion.  They are:
\begin{center}
\begin{tabular}{c@{\qquad}cccc}
\toprule
leaf type & $m$ & $d$ & $e$ & $r$\\
\midrule
$C_k^*$ & $1$ & $0$ & $0$ & $c_k$\\
$B_j^*$ & $1$ & $1$ & $0$ & $b_j$\\
$A_i^*$ & $2$ & $1$ & $1$ & $a_i$\\
$T^*$   & $4$ & $2$ & $1$ & $H$\\
\bottomrule
\end{tabular}
\end{center}
For a subtree, each coordinate denotes the sum of that coordinate over its descendant leaves.  Over the full instance,
\begin{equation}\label{eq:coordtotals}
 m_{\mathrm{tot}}=4P,\qquad
 d_{\mathrm{tot}}=2P,\qquad
 e_{\mathrm{tot}}=P,\qquad
 r_{\mathrm{tot}}=PH.
\end{equation}
The number of leaves is
\begin{equation}\label{eq:Nleaves}
 N=3n+q=2n+P\le3P.
\end{equation}

For an internal node $v$ and an additive coordinate $x$, write
\[
 \Delta x_v=x(v_L)-x(v_R).
\]
Its imbalance is therefore
\begin{equation}\label{eq:nodecostcoords}
 \imb(v)=\abs{Q^3\Delta m_v+Q^2\Delta d_v+Q\Delta e_v+\Delta r_v}.
\end{equation}

\subsection{Scale separation}

\begin{lemma}[Scale inequalities]\label{lem:scale}
For $Q$ as in~\eqref{eq:Qdef},
\begin{align}
 Q&>(3P^2+P)H, \label{eq:scale1}\\
 Q^2&>(3P^2+P)(Q+H), \label{eq:scale2}\\
 Q^3&>3PQ^2+2PQ+2PH. \label{eq:scale3}
\end{align}
In particular, $Q>PH$.
\end{lemma}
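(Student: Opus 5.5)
The plan is direct substitution of the definition $Q=100P^3(H+1)$, using only $P\ge 1$ and $H\ge 1$. Write $R=3P^2+P$. Since $P\ge1$, we have $R\le 4P^2$. Note also that $Q\ge 100(H+1)>H$. I will prove the three inequalities in order, each by bounding the right-hand side crudely by a multiple of a lower power of $Q$ and then absorbing the constant into the factor $100P^3$.

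For \eqref{eq:scale1}, the estimate is immediate:
\[
Q=100P^3(H+1)>4P^2H\ge RH.
\]
The inequality $Q>PH$ follows in the same way, since $PH\le RH<Q$.

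For \eqref{eq:scale2}, I use $H<Q$ to get $Q+H<2Q$. Hence
\[
R(Q+H)<2RQ\le 8P^2Q.
\]
It therefore suffices that $Q>8P^2$. This holds because $Q\ge 100P^3\cdot 2$, and so $Q^2=Q\cdot Q>8P^2Q$.

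For \eqref{eq:scale3}, I bound the terms separately:
\[
2PQ\le 2PQ^2/Q\quad\text{and}\quad 2PH<2PQ\le 2PQ^2.
\]
These give $3PQ^2+2PQ+2PH\le 7PQ^2$, and the right-hand side is below $Q^3$ because $Q>7P$.

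There is no genuine obstacle here; the only care needed is keeping the inequalities strict and using $P,H\ge1$ so that the absorptions ($H<Q$, $Q>8P^2$, $Q>7P$) are all valid.
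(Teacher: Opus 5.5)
Your proof is correct and matches the paper's argument essentially step for step: $3P^2+P\le 4P^2$, then $Q>H$ and $Q>8P^2$ for the second inequality, and absorbing everything into $7PQ^2<Q^3$ for the third. One slip: $2PQ\le 2PQ^2/Q$ is a tautology, and what your $7PQ^2$ estimate actually needs is $2PQ\le 2PQ^2$, which holds since $Q\ge 1$.
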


The proof is deferred to Appendix~\ref{app:estimates}.

\begin{lemma}[Variation bound]\label{lem:variation}
Assign a nonnegative value $x(\ell)$ to each leaf $\ell$, and extend $x$
additively to every subtree: $x(S)$ is the sum of the values on the leaves of
$S$.  Let $X=x(\mathrm{root})$.  If $X>0$, then
\[
 \sum_{v\text{ internal}}\abs{\Delta x_v}<3PX.
\]
If $X=0$, the left-hand side is $0$.  In particular,
\begin{equation}\label{eq:variation-er}
 \sum_v\abs{\Delta e_v}<3P^2,
 \qquad
 \sum_v\abs{\Delta r_v}<3P^2H.
\end{equation}
\end{lemma}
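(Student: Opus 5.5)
We bound each node's contribution by the total mass beneath it and then count how many times a leaf's mass gets counted. For an internal node $v$ with children $v_L,v_R$, nonnegativity of $x$ gives
\[
 \abs{\Delta x_v}=\abs{x(v_L)-x(v_R)}\le \max\bigl(x(v_L),x(v_R)\bigr)\le x(v_L)+x(v_R)=x(v).
\]
Summing over internal nodes and exchanging the order of summation,
\[
 \sum_{v\text{ internal}}\abs{\Delta x_v}\le\sum_{v\text{ internal}}x(v)=\sum_{\ell}x(\ell)\cdot\#\{\text{internal ancestors of }\ell\}.
\]
The number of internal ancestors of $\ell$ is its depth, and depth is at most $N-1$ in a full binary tree with $N$ leaves. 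By~\eqref{eq:Nleaves}, $N-1\le 3P-1$, so the sum is at most $(3P-1)X<3PX$ whenever $X>0$.

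If $X=0$, then every leaf value vanishes, because all values are nonnegative. Every subtree sum is then $0$, every $\Delta x_v$ is $0$, and the left-hand side is $0$.

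For~\eqref{eq:variation-er}, we apply the lemma to the coordinates $e$ and $r$. Both are nonnegative on every leaf by the coordinate table, since $a_i,b_j,c_k,H>0$. From~\eqref{eq:coordtotals}, the root values are $X=e_{\mathrm{tot}}=P>0$ and $X=r_{\mathrm{tot}}=PH>0$. These give the bounds $3P\cdot P=3P^2$ and $3P\cdot PH=3P^2H$, as stated.

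The only step needing care is the depth bound. We use depth $\le N-1$ together with $N\le 3P$, which gives the strict inequality with room to spare. The rest is the single exchange of summation above.
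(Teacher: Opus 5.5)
Your proposal is correct and follows essentially the same approach as the paper's proof. Both bound each $|\Delta x_v|$ by $x(v)$, exchange the order of summation so that each leaf is counted once per internal ancestor, use that there are at most $N-1<3P$ such ancestors, dispose of $X=0$ by nonnegativity, and specialize to $e$ and $r$ via $e_{\mathrm{tot}}=P$ and $r_{\mathrm{tot}}=PH$.
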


The proof is deferred to Appendix~\ref{app:estimates}.

\subsection{Roadmap of the soundness argument}\label{sec:roadmap}

The completeness direction uses the matching triples directly and is
straightforward.  Soundness is the central part of the proof, because the
constructed instance does not prescribe a topology.  We briefly describe the
logic before giving the details.

First, the $Q^3$ coordinate acts as a structural mass.  Scale separation
implies that every internal node of a mobile of cost at most $L$ must split
this mass equally.  Once this is known, the $Q^2$ coordinate can be analyzed
through the additive potential $\delta=2d-m$.  A telescoping lower bound
shows that the total $d$-variation has a forced minimum.  Reaching the target
cost makes this bound tight at every node; its equality condition implies
that every nonzero $d$-difference occurs at a node separating a pure block of
$B^*$ leaves from an equally large pure block of $C^*$ leaves.  We call such
a node a $BC$ junction.

We then contract balanced $BC$ blocks and repeat the same argument at the
$Q$ scale with a second potential.  Tightness now forces every nonzero
$e$-difference to join a pure block of $A^*$ leaves to an equally large
balanced $BC$ block.  These two families of forced junctions already account
for the entire threshold $L$.  Consequently, every remaining internal node
must have zero residual imbalance.

Finally, distinctness of the normalized source integers turns this global
normal form into local triples: any cancellation block of size at least two
contains a cherry of two leaves of the same type, and the zero-residual
condition would force their residual weights to be equal.  Thus all forced
junctions have size one.  After contracting the resulting $ABC$ triples, the
upper part of the tree consists of objects with identical high-order
coordinates; equal structural mass and zero residual imbalance force all
triple sums to equal the target $H$.  The original numerical matching can
then be read off from these triples.

\section{Correctness of the reduction}\label{sec:correctness}

We first prove completeness and then soundness.  The soundness proof follows
the three numerical scales of the construction; rather than introducing a new
subsection for each lemma, we use short transitions between the successive
steps.

\subsection{Completeness}

\begin{lemma}\label{lem:yes}
If the \DN{} instance is a yes-instance, then the constructed \UBM{} instance has a mobile of cost exactly $L$.
\end{lemma}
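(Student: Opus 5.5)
The plan is to build the mobile explicitly from a valid matching: one small gadget per matched triple, padding leaves that weigh exactly as much as a gadget, and a perfectly balanced tree on top. Fix a solution of the \DN{} instance, i.e.\ $n$ triples $(\alpha_i,\beta_j,\gamma_k)$ with $\alpha_i+\beta_j+\gamma_k=t$. By~\eqref{eq:norm-equiv} of Lemma~\ref{lem:normalize}, the corresponding normalized triples satisfy $a_i+b_j+c_k=H$.

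For each matched triple $(i,j,k)$ we build a gadget of three leaves in two steps.
\begin{itemize}
\item First join $B_j^*$ and $C_k^*$ at a node $u$. By~\eqref{eq:bgtc} its imbalance is
\[
\imb(u)=(Q^3+Q^2+b_j)-(Q^3+c_k)=Q^2+b_j-c_k>0,
\]
and $W(u)=2Q^3+Q^2+b_j+c_k$.
\item Then join $A_i^*$ with $u$ at a node $v$. Using~\eqref{eq:agtbc}, its imbalance is
\[
\imb(v)=(2Q^3+Q^2+Q+a_i)-(2Q^3+Q^2+b_j+c_k)=Q+a_i-b_j-c_k>0.
\]
\end{itemize}
The total weight of the gadget is $4Q^3+2Q^2+Q+(a_i+b_j+c_k)=4Q^3+2Q^2+Q+H=T^*$. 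In coordinates, the gadget's $(m,d,e,r)$ is exactly $(4,2,1,H)$, the same as a padding leaf.

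This gives $n$ gadgets and $q$ padding leaves, that is, $P$ objects of identical weight $T^*$. Since $P$ is a power of two, we place these objects at the leaves of a complete binary tree of depth $\log_2 P$. Every internal node of this upper tree has two children of equal weight $(P/2^{\ell})\,T^*$ at the appropriate level, so its imbalance is $0$. When $P=1$ there is nothing to build, but $n\ge2$ excludes this case anyway. The resulting full binary tree has $3n+q=N$ leaves and uses each constructed weight exactly once, so it is a valid mobile for the instance.

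The cost is then a direct summation. Only the two gadget nodes of each triple contribute, and all signs are known, so the absolute values can be dropped:
\[
\cost=\sum_{(i,j,k)}\bigl[(Q^2+b_j-c_k)+(Q+a_i-b_j-c_k)\bigr]=nQ^2+nQ+\sum_i a_i-2\sum_k c_k=L,
\]
where we use that each $a_i$ and each $c_k$ appears in exactly one triple. This matches~\eqref{eq:Ldef}, so the mobile has cost exactly $L$.

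There is no real obstacle in this direction. The only points needing care are the signs of the two gadget imbalances, which follow from~\eqref{eq:bgtc} and~\eqref{eq:agtbc}, and the equality between the gadget weight and $T^*$, which is where the target $H$ enters.
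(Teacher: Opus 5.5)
Your proposal is correct and follows the paper's proof essentially step for step. You build the same $B_j^*C_k^*$-then-$A_i^*$ gadgets, with imbalance signs fixed by \eqref{eq:bgtc} and \eqref{eq:agtbc}, observe that each gadget weighs exactly $T^*$, and place the $P$ equal-weight objects on a perfect binary tree, so the cost sums to exactly $L$.
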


\begin{proof}
Take a valid matching.  For each matched triple $(a_i,b_j,c_k)$, first join $B_j^*$ and $C_k^*$.  By~\eqref{eq:bgtc}, this node costs
\[
 Q^2+b_j-c_k.
\]
Join its parent to $A_i^*$.  The $Q^3$ and $Q^2$ terms cancel, and~\eqref{eq:agtbc} gives cost
\[
 Q+a_i-b_j-c_k.
\]
Thus the three-leaf gadget costs
\begin{equation}\label{eq:gadgetcost}
 Q^2+Q+a_i-2c_k.
\end{equation}
Its total weight is
\[
 4Q^3+2Q^2+Q+(a_i+b_j+c_k)=T^*.
\]
After forming all $n$ gadgets, we have those $n$ equal-weight gadget roots together with $q=P-n$ padding leaves of the same weight $T^*$.  Arrange these $P$ objects as the leaves of a perfect binary tree.  Every imbalance above the gadgets is zero.  Summing~\eqref{eq:gadgetcost} over the matching gives exactly
\[
 nQ^2+nQ+\sum_i a_i-2\sum_kc_k=L.
\]
\qed
\end{proof}

\subsection{Soundness}

We now prove the converse.  Fix a mobile $\mathcal T$ of cost at most $L$.
We begin with a general telescoping inequality that will be used at both lower
scales.

\begin{lemma}[Telescoping potential]\label{lem:telescoping}
Let $\phi$ be any real-valued additive function on subtrees.  Then
\begin{equation}\label{eq:telescoping}
 \sum_{v\text{ internal}}\abs{\phi(v_L)-\phi(v_R)}
 \ge
 \sum_{\ell\text{ leaf}}\abs{\phi(\ell)}-\abs{\phi(\mathrm{root})}.
\end{equation}
If equality holds in~\eqref{eq:telescoping}, then at every internal node $v$,
\begin{equation}\label{eq:equalabs}
 \abs{\phi(v_L)}=\abs{\phi(v_R)}.
\end{equation}
\end{lemma}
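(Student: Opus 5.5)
The plan is a per-node reverse-triangle inequality followed by a telescoping sum. At an internal node $v$, additivity gives $\phi(v)=\phi(v_L)+\phi(v_R)$. The elementary inequality
\[
 \abs{x-y}\ge \abs{x}+\abs{y}-\abs{x+y},
\]
valid for all reals $x,y$, then yields
\[
 \abs{\phi(v_L)-\phi(v_R)}\ \ge\ \abs{\phi(v_L)}+\abs{\phi(v_R)}-\abs{\phi(v)}.
\]
To justify it, suppose by symmetry $\abs{x}\ge\abs{y}$. Then $\abs{x-y}+\abs{x+y}\ge 2\abs{x}$ by the triangle inequality applied to $(x-y)+(x+y)=2x$. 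Since $2\abs{x}=\abs{x}+\abs{x}\ge\abs{x}+\abs{y}$, the inequality follows.

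Next I sum this over all internal nodes. Every non-root node is the child of exactly one internal node, so each $\abs{\phi(u)}$ with $u\ne\mathrm{root}$ appears once with a plus sign. Every internal node appears once with a minus sign, as the parent. The internal non-root nodes therefore cancel, leaving $\sum_{\ell}\abs{\phi(\ell)}$ from the leaves and $-\abs{\phi(\mathrm{root})}$, since the root is internal and never a child. This gives~\eqref{eq:telescoping}. In the degenerate case where the root is a leaf, both sides are $0$.

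For the equality case: if equality holds in~\eqref{eq:telescoping}, then equality holds in every per-node inequality, because each term is nonnegative slack. So it remains to characterize equality in $\abs{x-y}=\abs{x}+\abs{y}-\abs{x+y}$ and show it forces $\abs{x}=\abs{y}$. With $\abs{x}\ge\abs{y}$, equality requires both steps of the chain to be tight. The step $2\abs{x}\ge\abs{x}+\abs{y}$ is tight only if $\abs{x}=\abs{y}$, which is exactly~\eqref{eq:equalabs}. The other step, the triangle inequality, is tight only if $x-y$ and $x+y$ have the same sign, but we only need the necessary condition $\abs{x}=\abs{y}$.

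I expect no real obstacle. The only care needed is the bookkeeping in the telescoping (each node counted once as a child and once as a parent) and handling the equality analysis cleanly through the chained inequality, rather than through sign cases.
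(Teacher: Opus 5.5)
Your proposal is correct and follows essentially the paper's proof: the same per-node inequality $|x-y|\ge |x|+|y|-|x+y|$, summed over internal nodes and telescoped, with equality forcing every per-node slack to vanish and hence $|\phi(v_L)|=|\phi(v_R)|$. The only cosmetic difference is that the paper uses the exact identity $|x-y|+|x+y|=2\max\{|x|,|y|\}$, whereas you derive the needed inequality from the triangle inequality. That suffices, since only the necessary direction of the equality characterization is required.
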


\begin{proof}
Let $v$ be an internal node with children $u,w$.  Additivity gives
$\phi(v)=\phi(u)+\phi(w)$.  Define its local gap by
\[
 g(v)=|\phi(u)-\phi(w)|+|\phi(v)|-|\phi(u)|-|\phi(w)|.
\]
For arbitrary real $x,y$,
\[
 |x-y|+|x+y|=2\max\{|x|,|y|\}\ge |x|+|y|,
\]
so $g(v)\ge0$.  Summing these gaps over all internal nodes telescopes: every
non-root internal term $|\phi(v)|$ appears once with a plus sign at $v$ and
once with a minus sign at its parent.  Hence
\[
 \sum_{v\text{ internal}}|\phi(v_L)-\phi(v_R)|
 -\sum_{\ell\text{ leaf}}|\phi(\ell)|+|\phi(\mathrm{root})|
 =\sum_{v\text{ internal}}g(v)\ge0,
\]
which is~\eqref{eq:telescoping}.

If equality holds globally, then every nonnegative local gap is zero.  From
the displayed identity for $|x-y|+|x+y|$, this happens at $v$ exactly when
$|\phi(v_L)|=|\phi(v_R)|$, giving~\eqref{eq:equalabs}.
\qed
\end{proof}

The highest-order coordinate first forces every split to preserve structural mass.

\begin{lemma}[Equal structural mass]\label{lem:mzero}
If a mobile $\mathcal T$ for the constructed instance satisfies $\cost(\mathcal T)\le L$, then
\[
 \Delta m_v=0
 \qquad\text{for every internal node }v.
\]
\end{lemma}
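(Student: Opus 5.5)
Suppose for contradiction that some internal node $v$ has $\Delta m_v\neq0$. Since $m$ is integer-valued, $\abs{\Delta m_v}\ge1$. By \eqref{eq:nodecostcoords} and the reverse triangle inequality,
\[
 \imb(v)\ge Q^3-Q^2\abs{\Delta d_v}-Q\abs{\Delta e_v}-\abs{\Delta r_v}.
\]
The plan is to bound the three subtracted terms and show that this single node already costs more than $L$. Since all imbalances are nonnegative, this would give $\cost(\mathcal T)\ge\imb(v)>L$, a contradiction.

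For the subtracted terms we use crude subtree bounds rather than the variation lemma, because only one node is involved. Each coordinate of a child subtree is nonnegative and at most the global total in \eqref{eq:coordtotals}, so $\abs{\Delta d_v}\le d_{\mathrm{tot}}=2P$, $\abs{\Delta e_v}\le P$, and $\abs{\Delta r_v}\le PH$. Hence
\[
 \imb(v)\ge Q^3-2PQ^2-PQ-PH.
\]
Alternatively, Lemma~\ref{lem:variation} gives the slightly weaker bounds $3P\cdot 2P$ and so on. The direct bound is cleaner, and it matches the shape of \eqref{eq:scale3}.

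Next we bound $L$ from above. By \eqref{eq:Ldef}, $L\le nQ^2+nQ+\sum_i a_i\le PQ^2+PQ+PH$, using $n\le P$ and $a_i<H$ from \eqref{eq:lessH}. Combining, it suffices to show
\[
 Q^3>3PQ^2+2PQ+2PH,
\]
which is exactly \eqref{eq:scale3} of Lemma~\ref{lem:scale}. Then $\imb(v)>L$, so $\cost(\mathcal T)>L$, contradicting the hypothesis.

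The only delicate point is making sure the coordinate differences are bounded by the totals. This holds because all four coordinates are nonnegative for every leaf type, so each child's coordinate lies between $0$ and the total. The rest is direct bookkeeping against \eqref{eq:scale3}, and I expect no real obstacle.
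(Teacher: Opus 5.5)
Your proof is correct and follows the paper's argument essentially verbatim. You bound $\abs{\Delta d_v},\abs{\Delta e_v},\abs{\Delta r_v}$ by the global totals $2P,P,PH$, bound $L$ above by $PQ^2+PQ+PH$ using $n\le P$ and \eqref{eq:lessH}, and close with \eqref{eq:scale3}. Your non-strict bound on $L$ suffices because \eqref{eq:scale3} is strict.
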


\begin{proof}
Suppose $\Delta m_v\ne0$ for some internal node $v$.  Since $\Delta m_v$ is integral, $\abs{\Delta m_v}\ge1$.  Using~\eqref{eq:coordtotals},
\[
 \abs{\Delta d_v}\le2P,\qquad
 \abs{\Delta e_v}\le P,\qquad
 \abs{\Delta r_v}\le PH.
\]
Thus~\eqref{eq:nodecostcoords} gives
\[
 \imb(v)\ge Q^3-2PQ^2-PQ-PH.
\]
On the other hand, by~\eqref{eq:lessH},
\[
 L<nQ^2+nQ+PH\le PQ^2+PQ+PH.
\]
Inequality~\eqref{eq:scale3} is equivalent to
\[
 Q^3-2PQ^2-PQ-PH>PQ^2+PQ+PH,
\]
so $\imb(v)>L$, contradicting $\cost(\mathcal T)\le L$.
\qed
\end{proof}

For the remainder of the soundness proof, fix a mobile $\mathcal T$ satisfying $\cost(\mathcal T)\le L$.  By Lemma~\ref{lem:mzero}, every internal node has equal structural mass on its two sides, and hence its cost simplifies to
\begin{equation}\label{eq:costafterM}
 \imb(v)=\abs{Q^2\Delta d_v+Q\Delta e_v+\Delta r_v}.
\end{equation}

We next analyze the $Q^2$ coordinate.  The potential $\delta=2d-m$ isolates the discrepancy between $B^*$ and $C^*$ leaves.

Define the additive potential
\begin{equation}\label{eq:deltadef}
 \delta=2d-m.
\end{equation}
For individual leaves,
\begin{equation}\label{eq:deltaleaves}
 \delta(C_k^*)=-1,\qquad
 \delta(B_j^*)=1,\qquad
 \delta(A_i^*)=\delta(T^*)=0.
\end{equation}
The root has $\delta=0$, and Lemma~\ref{lem:mzero} gives
\begin{equation}\label{eq:deltad}
 \Delta\delta_v=2\Delta d_v.
\end{equation}

\begin{lemma}\label{lem:dexact}
\[
 \sum_v\abs{\Delta d_v}=n.
\]
Consequently,
\begin{equation}\label{eq:deltaequal}
 \abs{\delta(v_L)}=\abs{\delta(v_R)}
 \qquad\text{at every internal node }v.
\end{equation}
\end{lemma}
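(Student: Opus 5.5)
We prove the two inequalities $\sum_v\abs{\Delta d_v}\ge n$ and $\sum_v\abs{\Delta d_v}\le n$ separately; the equality case of Lemma~\ref{lem:telescoping} then yields~\eqref{eq:deltaequal}.

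\emph{Lower bound.} Apply Lemma~\ref{lem:telescoping} to the additive potential $\phi=\delta$. By~\eqref{eq:deltaleaves}, $\sum_{\ell}\abs{\delta(\ell)}=2n$, since there are $n$ leaves of each of the types $B^*$ and $C^*$. The root has $\delta=2d_{\mathrm{tot}}-m_{\mathrm{tot}}=0$ by~\eqref{eq:coordtotals}. Hence $\sum_v\abs{\Delta\delta_v}\ge 2n$, and~\eqref{eq:deltad} gives $\sum_v\abs{\Delta d_v}\ge n$.

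\emph{Upper bound.} Here we use the cost threshold. By~\eqref{eq:costafterM} and the triangle inequality,
\[
 \imb(v)\ge Q^2\abs{\Delta d_v}-Q\abs{\Delta e_v}-\abs{\Delta r_v}.
\]
Summing over $v$ and invoking~\eqref{eq:variation-er} gives
\[
 L\ge\cost(\mathcal T)> Q^2\sum_v\abs{\Delta d_v}-3P^2Q-3P^2H.
\]
Suppose $\sum_v\abs{\Delta d_v}\ge n+1$. This sum is an integer, so the display gives $L>(n+1)Q^2-3P^2(Q+H)$. On the other hand, by~\eqref{eq:lessH}, $L<nQ^2+nQ+nH\le nQ^2+P(Q+H)$. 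Together these require $Q^2<(3P^2+P)(Q+H)$, which contradicts~\eqref{eq:scale2}. So $\sum_v\abs{\Delta d_v}\le n$.

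The two bounds give $\sum_v\abs{\Delta d_v}=n$. Translated back through~\eqref{eq:deltad}, this says $\sum_v\abs{\Delta\delta_v}=2n$, so equality holds in~\eqref{eq:telescoping} for $\phi=\delta$. The equality clause~\eqref{eq:equalabs} of Lemma~\ref{lem:telescoping} then gives~\eqref{eq:deltaequal}.

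The only delicate step is the upper bound. It needs the $Q$- and $r$-scale contributions to be absorbed into less than one unit of $Q^2$, and it needs the bound on $L$ from~\eqref{eq:lessH} written with a margin that~\eqref{eq:scale2} covers. The variation bound from Lemma~\ref{lem:variation} is stated to fit exactly this estimate.
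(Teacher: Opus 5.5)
Your proof is correct and follows essentially the same route as the paper. You apply the telescoping lemma to $\delta$ to get $\sum_v\abs{\Delta d_v}\ge n$; you assume the sum is at least $n+1$ and use the triangle inequality, the variation bound and \eqref{eq:scale2} to contradict $\cost(\mathcal T)\le L$; then the equality clause gives \eqref{eq:deltaequal}. The only difference is cosmetic: you phrase the contradiction as a violation of \eqref{eq:scale2} rather than as $\cost(\mathcal T)>L$.
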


\begin{proof}
Applying Lemma~\ref{lem:telescoping} to $\delta$, using~\eqref{eq:deltaleaves} and~\eqref{eq:deltad}, gives
\[
 2\sum_v\abs{\Delta d_v}
 =\sum_v\abs{\Delta\delta_v}
 \ge 2n,
\]
so $\sum_v\abs{\Delta d_v}\ge n$.  Since $d$ is integer-valued, $\sum_v\abs{\Delta d_v}$ is an integer.

Suppose instead that it is at least $n+1$.  By~\eqref{eq:costafterM}, the triangle inequality, and Lemma~\ref{lem:variation},
\begin{align*}
 \cost(\mathcal T)
 &\ge Q^2\sum_v\abs{\Delta d_v}
      -Q\sum_v\abs{\Delta e_v}
      -\sum_v\abs{\Delta r_v}\\
 &>(n+1)Q^2-3P^2Q-3P^2H.
\end{align*}
By~\eqref{eq:scale2},
\[
 Q^2-3P^2Q-3P^2H>PQ+PH.
\]
Hence
\[
 \cost(\mathcal T)>nQ^2+PQ+PH>L,
\]
a contradiction.  Thus equality holds in the telescoping bound, and Lemma~\ref{lem:telescoping} yields~\eqref{eq:deltaequal}.
\qed
\end{proof}

\begin{lemma}[Sign purity for $\delta$]\label{lem:deltapurity}
If a subtree $S$ has $\delta(S)>0$, then every leaf in $S$ is of type $B^*$.  If $\delta(S)<0$, then every leaf in $S$ is of type $C^*$.
\end{lemma}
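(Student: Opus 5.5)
Induct on the height of $S$, i.e.\ structural induction from the leaves upward, using the equal-absolute-value property~\eqref{eq:deltaequal} of Lemma~\ref{lem:dexact}. The leaf case is immediate from~\eqref{eq:deltaleaves}: a single leaf has $\delta>0$ only if it is a $B^*$ leaf ($\delta=1$), and $\delta<0$ only if it is a $C^*$ leaf ($\delta=-1$). Leaves of type $A^*$ and $T^*$ have $\delta=0$, so the hypothesis is vacuous for them.

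For the inductive step, let $S$ be rooted at an internal node $v$ with children $v_L,v_R$, and suppose $\delta(S)>0$; the case $\delta(S)<0$ is symmetric, swapping $B^*$ and $C^*$. By~\eqref{eq:deltaequal}, $\abs{\delta(v_L)}=\abs{\delta(v_R)}$. Additivity gives $\delta(S)=\delta(v_L)+\delta(v_R)$, and we consider the possible signs of the two children:
\begin{itemize}
\item If the children had opposite signs, they would cancel exactly and give $\delta(S)=0$.
\item If both were $0$, again $\delta(S)=0$.
\end{itemize}
Both contradict $\delta(S)>0$. Hence $\delta(v_L)=\delta(v_R)=\delta(S)/2>0$. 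The induction hypothesis, applied to each child subtree, then says that every leaf under $v_L$ and every leaf under $v_R$ is of type $B^*$. So every leaf of $S$ is of type $B^*$.

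There is no real obstacle here. The only point needing care is that~\eqref{eq:deltaequal} holds at every internal node of the whole mobile $\mathcal T$, which Lemma~\ref{lem:dexact} supplies, so it applies inside every subtree $S$. The key observation is that equal absolute values together with a nonzero sum force both children to share the sign of the parent, and this is what drives the induction.
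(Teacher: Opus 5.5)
Your proof is correct and follows the paper's argument: induction on the subtree (you use height, the paper uses number of leaves), with the leaf case from \eqref{eq:deltaleaves} and the inductive step using \eqref{eq:deltaequal} together with $\delta(v_L)+\delta(v_R)>0$ to force $\delta(v_L)=\delta(v_R)>0$. The case $\delta(v_L)=\delta(v_R)<0$, which you did not list, is excluded trivially because the sum would then be negative.
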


\begin{proof}
Induct on the number of leaves in $S$.  The leaf case follows from~\eqref{eq:deltaleaves}.  If $S$ is internal and $\delta(S)>0$, let $x=\delta(S_L)$ and $y=\delta(S_R)$.  Equation~\eqref{eq:deltaequal} gives $\abs{x}=\abs{y}$, while $x+y>0$.  Hence $x=y>0$, and the induction hypothesis applies to both children.  The negative case is symmetric.
\qed
\end{proof}

\begin{definition}[$BC$ junction]
An internal node is a \emph{$BC$ junction of size $s$} if one child consists of exactly $s$ leaves of type $B^*$ and the other consists of exactly $s$ leaves of type $C^*$.
\end{definition}

\begin{lemma}[Exact $BC$ normal form]\label{lem:bcjunction}
An internal node has $\Delta d_v\ne0$ if and only if it is a $BC$ junction.  At a size-$s$ $BC$ junction,
\[
 \Delta e_v=0
\]
and its cost is
\begin{equation}\label{eq:bccost}
 sQ^2+\sum_{B\text{ side}}b_j-\sum_{C\text{ side}}c_k.
\end{equation}
Moreover, the $BC$ junctions partition all $B^*$ and $C^*$ leaves, and their total cost is exactly
\begin{equation}\label{eq:totalbccost}
 C_{BC}=nQ^2+\sum_jb_j-\sum_kc_k.
\end{equation}
\end{lemma}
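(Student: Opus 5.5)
We will prove the statement by combining \eqref{eq:deltad}, the equality condition \eqref{eq:deltaequal}, and the sign purity of Lemma~\ref{lem:deltapurity}.

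\emph{Characterization of nodes with $\Delta d_v\ne0$.} Let $v$ be internal with $\Delta d_v\ne0$. Then \eqref{eq:deltad} gives $\delta(v_L)\ne\delta(v_R)$. By \eqref{eq:deltaequal}, $\abs{\delta(v_L)}=\abs{\delta(v_R)}$, so the two values have opposite signs and $\delta(v_L)=-\delta(v_R)\ne0$. Say $\delta(v_L)=s>0$. Lemma~\ref{lem:deltapurity} then says $v_L$ consists only of $B^*$ leaves, and since each contributes $+1$ to $\delta$, there are exactly $s$ of them. Likewise $v_R$ consists of exactly $s$ leaves of type $C^*$. So $v$ is a $BC$ junction of size $s$.

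Conversely, at a $BC$ junction the $B$ side has $d=s$ and the $C$ side has $d=0$, so $\Delta d_v=\pm s\ne0$.

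\emph{Cost of a junction.} Both sides of a $BC$ junction have $e=0$, so $\Delta e_v=0$. Then \eqref{eq:costafterM} gives
\[
\imb(v)=\abs{\pm sQ^2\pm\Delta r_v}.
\]
Each side has at most $n$ leaves with residuals below $H$, so $\abs{\Delta r_v}<nH<Q$ by Lemma~\ref{lem:scale}. Hence the $Q^2$ term determines the sign, and the absolute value equals \eqref{eq:bccost}. (One could also use \eqref{eq:bgtc} directly, since $\sum_B b_j>\sum_C c_k$ for equal counts $s$.)

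\emph{Partition and total cost.} This is the step needing the most care. Each $B^*$ or $C^*$ leaf has $\delta\ne0$, while the root has $\delta=0$. Consider the path from such a leaf $\ell$ upward, and let $u$ be the last ancestor of $\ell$ (possibly $\ell$ itself) with $\delta(u)\ne0$. This $u$ is not the root, so it has a parent $p$ with $\delta(p)=0$. By \eqref{eq:deltaequal}, the sibling $u'$ of $u$ satisfies $\delta(u')=-\delta(u)$, hence $\Delta\delta_p\neq0$ and $p$ is a junction containing $\ell$.

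For uniqueness, we must show no two junctions share a leaf. Two such junctions would be nested, one lying below the other. The lower junction $p$ has $\delta(p)=0$, but it lies inside a pure child of the upper junction, whose subtrees all have nonzero $\delta$ of one sign by Lemma~\ref{lem:deltapurity}. This is a contradiction. Therefore the junctions partition the $B^*$ and $C^*$ leaves.

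Summing \eqref{eq:bccost} over this partition, the sizes $s$ add up to $n$ and the residual terms add up to $\sum_j b_j-\sum_k c_k$. This gives \eqref{eq:totalbccost}. As a consistency check, the sizes also sum to $n$ by Lemma~\ref{lem:dexact}.
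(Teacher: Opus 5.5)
Your proof is correct and follows essentially the same route as the paper: you characterize the nodes with $\Delta d_v\ne0$ via \eqref{eq:deltad}, \eqref{eq:deltaequal} and sign purity, and you obtain the partition from the highest nonzero-$\delta$ ancestor of each leaf (equivalently, its first zero ancestor). There are two minor deviations, both valid. You fix the sign of the junction cost by scale separation ($|\Delta r_v|<nH<Q$) rather than by \eqref{eq:bgtc}. You prove disjointness by a nesting contradiction (a junction with $\delta=0$ cannot lie inside a pure child) instead of by uniqueness of the first zero ancestor.
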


\begin{proof}
If $\Delta d_v\ne0$, then by~\eqref{eq:deltad} the child $\delta$-values differ.  Equation~\eqref{eq:deltaequal} forces them to be $s$ and $-s$ for some $s\ge1$.  Lemma~\ref{lem:deltapurity} then implies that the positive child contains exactly $s$ $B^*$ leaves and the negative child exactly $s$ $C^*$ leaves.  The converse is immediate.  Both sides have $e=0$, proving $\Delta e_v=0$.

To prove the partition claim, fix a $B^*$ leaf $\ell$ and follow its unique
path to the root.  We have $\delta(\ell)=1$ and $\delta(\mathrm{root})=0$, so
there is a first node $v$ on this path with $\delta(v)=0$.  Let $u$ be the
child of $v$ containing $\ell$.  By minimality of $v$, all nodes from $\ell$
through $u$ have nonzero $\delta$.  Equation~\eqref{eq:deltaequal} implies
that a nonzero parent cannot change sign: if its children have equal absolute
$\delta$-values and their sum is nonzero, those values have the same sign.
Thus $\delta(u)>0$, and Lemma~\ref{lem:deltapurity} implies that every leaf
below $u$ is a $B^*$ leaf.  Since $\delta(v)=0$, the sibling $w$ of $u$ has
$\delta(w)=-\delta(u)<0$; hence every leaf below $w$ is a $C^*$ leaf.  The
magnitudes are equal, so the two pure sides contain the same number of leaves.
Therefore $v$ is a $BC$ junction.

The first zero ancestor of $\ell$ is unique.  All $B^*$ leaves in the same
positive child $u$ have this same first zero ancestor, and no $B^*$ leaf
outside $u$ does.  Consequently the positive children of the $BC$ junctions
are disjoint and cover all $B^*$ leaves.  The symmetric argument for $C^*$
leaves shows that the negative children are disjoint and cover all $C^*$
leaves.  Hence the $BC$ junctions partition all $B^*$ and $C^*$ leaves.

At a size-$s$ junction the high coordinates differ only in $d$, and~\eqref{eq:bgtc} makes the residual difference positive, yielding~\eqref{eq:bccost}.  Since every $B^*$ and $C^*$ leaf occurs in exactly one such junction and the sizes sum to $n$, summing~\eqref{eq:bccost} gives~\eqref{eq:totalbccost}.
\qed
\end{proof}

After the $B^*$ and $C^*$ structure is fixed, the $Q$ scale is analogous.  We contract the balanced $BC$ blocks and use a second potential.  To avoid repeating the same telescoping argument, the proofs of the three lemmas below are deferred to Appendix~\ref{app:qscale}.

Contract each maximal subtree containing only $B^*,C^*$ leaves and having total $\delta=0$ into a single meta-leaf.  By Lemma~\ref{lem:bcjunction}, these contracted blocks are disjoint and collectively contain all $B^*,C^*$ leaves.  A block containing $s$ $B^*$ leaves and $s$ $C^*$ leaves has $m=2s$ and $e=0$.

On the contracted tree define
\begin{equation}\label{eq:epsdef}
 \varepsilon=2e-\frac{m}{2}.
\end{equation}
Every contracted leaf has even $m$, and its $\varepsilon$-value is
\begin{equation}\label{eq:epsleaves}
 \varepsilon(A_i^*)=1,\qquad
 \varepsilon(T^*)=0,\qquad
 \varepsilon(BC_s)=-s.
\end{equation}
The root has $\varepsilon=0$.  Since $\Delta m=0$ throughout the original tree,
\begin{equation}\label{eq:eps-e}
 \Delta\varepsilon_v=2\Delta e_v
\end{equation}
for every internal node of the contracted tree.  Nodes removed by the contraction have $e=0$ throughout, so they contribute no $e$-variation.  Consequently, the total $e$-variation is the same whether it is computed in the original tree or in the contracted tree.

\begin{lemma}\label{lem:eexact}
\[
 \sum_v\abs{\Delta e_v}=n.
\]
Consequently, at every internal node of the contracted tree,
\begin{equation}\label{eq:epsequal}
 \abs{\varepsilon(v_L)}=\abs{\varepsilon(v_R)}.
\end{equation}
\end{lemma}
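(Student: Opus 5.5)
We mirror the proof of Lemma~\ref{lem:dexact}, now on the contracted tree and one scale lower. First, we apply Lemma~\ref{lem:telescoping} to the additive function $\varepsilon$ on the contracted tree. By~\eqref{eq:epsleaves}, the leaf terms sum to
\[
 \sum_{\ell}\abs{\varepsilon(\ell)}=n\cdot1+\sum_{\text{blocks}}s=n+n=2n,
\]
since the $BC$ blocks partition the $n$ $B^*$ leaves (Lemma~\ref{lem:bcjunction}). The root has $\varepsilon=0$. Using~\eqref{eq:eps-e}, this gives $2\sum_v\abs{\Delta e_v}\ge 2n$ over the contracted internal nodes. Nodes removed by the contraction have $e\equiv0$ on both sides, so the bound $\sum_v\abs{\Delta e_v}\ge n$ holds in the original tree as well. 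This sum is an integer.

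Next we rule out $\sum_v\abs{\Delta e_v}\ge n+1$ by a cost count that separates the two families of nodes. Split the internal nodes into the $BC$ junctions and the rest. By Lemma~\ref{lem:bcjunction}, the junctions have $\Delta e_v=0$ and contribute exactly $C_{BC}=nQ^2+\sum_j b_j-\sum_k c_k$. Every other node has $\Delta d_v=0$, so by~\eqref{eq:costafterM} its cost is $\abs{Q\Delta e_v+\Delta r_v}\ge Q\abs{\Delta e_v}-\abs{\Delta r_v}$. All $e$-variation occurs at these non-junction nodes. Summing and applying Lemma~\ref{lem:variation} to $r$ gives
\[
 \cost(\mathcal T)> nQ^2+\sum_jb_j-\sum_kc_k+(n+1)Q-3P^2H.
\]

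It remains to show that this exceeds $L=nQ^2+nQ+\sum_i a_i-2\sum_k c_k$. That amounts to
\[
 Q>3P^2H+\sum_i a_i-\sum_j b_j-\sum_k c_k .
\]
The right side is less than $3P^2H+PH$ by~\eqref{eq:lessH}, and~\eqref{eq:scale1} gives $Q>(3P^2+P)H$, so the contradiction follows. Hence $\sum_v\abs{\Delta e_v}=n$. Equality therefore holds in the telescoping bound for $\varepsilon$ on the contracted tree, and Lemma~\ref{lem:telescoping} yields~\eqref{eq:epsequal}.

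The main obstacle is the bookkeeping in this second step. The crude bound used in Lemma~\ref{lem:dexact} (subtracting $Q\sum\abs{\Delta e}$ from the $Q^2$ terms) is too lossy at this scale. We must instead use the exact junction cost from Lemma~\ref{lem:bcjunction} and the fact that non-junction nodes have $\Delta d_v=0$. Without $\Delta d_v=0$, a node's $Q^2$ and $Q$ terms could partially cancel. We also check that every contracted internal node is an original non-junction node. This holds because each junction lies inside a contracted block.
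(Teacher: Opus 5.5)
Your proof is correct and follows essentially the same route as the paper. Telescoping on $\varepsilon$ in the contracted tree gives the lower bound $n$, and $n+1$ is ruled out via $\cost(\mathcal T)\ge C_{BC}+Q\sum_v\abs{\Delta e_v}-\sum_v\abs{\Delta r_v}$ together with \eqref{eq:scale1} and \eqref{eq:lessH}. Your explicit split into $BC$ junctions (exact cost, $\Delta e_v=0$) and the remaining nodes (where $\Delta d_v=0$) simply spells out the inequality that the paper states in one line.
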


The proof is deferred to Appendix~\ref{app:qscale}.

\begin{lemma}[Sign purity for $\varepsilon$]\label{lem:epspurity}
In the contracted tree, a subtree with positive $\varepsilon$ consists entirely of $A^*$ leaves; a subtree with negative $\varepsilon$ consists entirely of balanced $BC$ meta-leaves.
\end{lemma}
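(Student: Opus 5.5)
We will prove Lemma~\ref{lem:epspurity} by induction on the number of leaves of the contracted tree lying in the subtree $S$, in exact parallel with Lemma~\ref{lem:deltapurity}. The two ingredients are the leaf values~\eqref{eq:epsleaves} and the equal-absolute-value property~\eqref{eq:epsequal} from Lemma~\ref{lem:eexact}.

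\emph{Base case.} $S$ is a single contracted leaf. The leaves with positive $\varepsilon$ are exactly the $A^*$ leaves, with value $1$. The leaves with negative $\varepsilon$ are exactly the balanced $BC$ meta-leaves, with value $-s$ where $s\ge1$. Padding leaves have $\varepsilon=0$, so they fall under neither hypothesis. Hence the claim holds for leaves.

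\emph{Inductive step.} Let $S$ be rooted at an internal node of the contracted tree, and set $x=\varepsilon(S_L)$, $y=\varepsilon(S_R)$. Since $\varepsilon$ is additive on the contracted tree (it is a linear combination of the additive coordinates $e$ and $m$), we have $\varepsilon(S)=x+y$. By~\eqref{eq:epsequal}, $\abs{x}=\abs{y}$.
\begin{itemize}
\item If $x+y>0$, then $x=-y$ is impossible, so $x=y>0$. The induction hypothesis then shows that both children consist only of $A^*$ leaves.
\item If $x+y<0$, the same argument gives $x=y<0$, and both children consist only of balanced $BC$ meta-leaves.
\end{itemize}
Combining the two children gives the claim for $S$.

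Internal nodes of the contracted tree are exactly the original internal nodes not absorbed into a meta-leaf, so~\eqref{eq:epsequal} applies at each of them. We also note that a subtree of the contracted tree is a union of meta-leaves and original leaves, so saying it ``consists entirely of'' a given type refers to contracted leaves.

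The only real content is Lemma~\ref{lem:eexact}, which we take as given. Beyond that, the one point needing care is that~\eqref{eq:epsequal} holds at every internal node of the contracted tree rather than only on some subset, and this is exactly what that lemma states. So we expect no genuine obstacle; the argument is a verbatim transfer of the proof of Lemma~\ref{lem:deltapurity}.
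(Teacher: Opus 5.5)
Your proof is correct and is essentially the paper's own argument: induction on the number of contracted leaves, with the base case read off from~\eqref{eq:epsleaves}, and an inductive step in which~\eqref{eq:epsequal} and $x+y\neq0$ force $x=y$, with the same sign as $\varepsilon(S)$. Your remarks about padding leaves and about which internal nodes survive the contraction are harmless clarifications the paper leaves implicit.
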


The proof is deferred to Appendix~\ref{app:qscale}.

\begin{definition}[$A/BC$ junction]
An internal node of the contracted tree is an \emph{$A/BC$ junction of size $s$} if one child consists of exactly $s$ leaves of type $A^*$ and the other consists of balanced $BC$ blocks containing, in total, exactly $s$ $B^*$ leaves and $s$ $C^*$ leaves.
\end{definition}

\begin{lemma}[Exact $A/BC$ normal form]\label{lem:abcjunction}
An internal node of the contracted tree has $\Delta e_v\ne0$ if and only if it is an $A/BC$ junction.  Every such junction has $\Delta d_v=0$.  The $A/BC$ junctions partition all $A^*$ leaves and all balanced $BC$ blocks, and their total cost is exactly
\begin{equation}\label{eq:totalabccost}
 C_{A/BC}=nQ+\sum_i a_i-\sum_jb_j-\sum_kc_k.
\end{equation}
\end{lemma}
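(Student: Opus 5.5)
We mirror the proof of Lemma~\ref{lem:bcjunction} on the contracted tree, using Lemmas~\ref{lem:eexact} and~\ref{lem:epspurity} in place of Lemmas~\ref{lem:dexact} and~\ref{lem:deltapurity}.

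\emph{Characterization.} Let $v$ be an internal node of the contracted tree with $\Delta e_v\ne0$. By~\eqref{eq:eps-e} its children have different $\varepsilon$-values. By~\eqref{eq:epsequal} these values must be $s$ and $-s$ for some $s>0$. Each contracted leaf has integral $\varepsilon$, so $s$ is a positive integer. By Lemma~\ref{lem:epspurity}, the positive child consists only of $A^*$ leaves, and there are exactly $s$ of them since each contributes $\varepsilon=1$. The negative child consists only of balanced $BC$ meta-leaves. Their sizes sum to $s$ by~\eqref{eq:epsleaves}, so that child contains exactly $s$ $B^*$ and $s$ $C^*$ leaves. Hence $v$ is an $A/BC$ junction. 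Conversely, at an $A/BC$ junction the $A$ side has $e=s$ and the $BC$ side has $e=0$, so $\Delta e_v=s\ne0$.

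For $\Delta d_v=0$, compute $d=s$ on the $A$ side and $d=s$ on the $BC$ side, since each $B^*$ leaf has $d=1$. As a consistency check, $m$ equals $2s$ on both sides, in agreement with Lemma~\ref{lem:mzero}.

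\emph{Partition.} Fix an $A^*$ leaf, or a $BC$ meta-leaf, and walk up to its first ancestor $v$ with $\varepsilon(v)=0$; this exists because the root has $\varepsilon=0$. By~\eqref{eq:epsequal}, a node with nonzero $\varepsilon$ has children with $\varepsilon$ of the same sign. So along the path the sign never flips, and the child $u$ of $v$ on the path has the leaf's sign. By Lemma~\ref{lem:epspurity}, $u$ is pure. Its sibling has $\varepsilon=-\varepsilon(u)$ and is pure of the opposite type, so $v$ is an $A/BC$ junction.

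Uniqueness of the first zero ancestor shows that the pure children of distinct junctions are disjoint. Together with the previous paragraph, this shows they cover every $A^*$ leaf and every $BC$ meta-leaf. Since there are $n$ $A^*$ leaves in total, the junction sizes sum to $n$. We also need that padding leaves $T^*$ ($\varepsilon=0$) never lie inside a pure side. This follows directly from Lemma~\ref{lem:epspurity}.

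\emph{Cost.} At a size-$s$ junction, $\Delta m=\Delta d=0$ and $\Delta e=s$. The residual difference is
\[
\sum_{A\text{ side}} a_i-\sum_{BC\text{ side}}(b_j+c_k).
\]
By~\eqref{eq:agtbc}, applied after pairing the $s$ $A$ leaves with the $s$ $(b,c)$ pairs, this difference is positive. It is also less than $sH<Q$ by~\eqref{eq:lessH} and Lemma~\ref{lem:scale}. Therefore the absolute value in~\eqref{eq:costafterM} opens without sign issues, and the cost is
\[
sQ+\sum_{A} a_i-\sum_{BC} (b_j+c_k).
\]
Summing over the partition, with sizes totalling $n$, gives~\eqref{eq:totalabccost}.

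The main obstacle is the partition step. It requires the sign-preservation argument on the contracted tree and a check that contraction does not hide any node with $\Delta e\ne0$. The latter holds because contracted nodes have $e=0$ throughout, as noted before Lemma~\ref{lem:eexact}.
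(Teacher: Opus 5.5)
Your proposal is correct and follows essentially the same route as the paper: the characterization via \eqref{eq:eps-e}, \eqref{eq:epsequal} and Lemma~\ref{lem:epspurity}, the first-zero-ancestor argument for the partition, and the per-junction cost computation via \eqref{eq:agtbc}. Your extra bound $sH<Q$ is harmless but unnecessary: once the $A$ side is oriented so that $\Delta e_v=s>0$, the residual difference is also positive, so the absolute value opens with the correct sign anyway.
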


The proof is deferred to Appendix~\ref{app:qscale}.

The two scale arguments combine into an exact accounting identity.

\begin{theorem}[Normal form]\label{thm:normalform}
Every internal node of $\mathcal T$ belongs to exactly one of the following three classes:
\begin{enumerate}
 \item[(i)] a $BC$ junction, for which $\Delta m=0$, $\Delta d\ne0$, and $\Delta e=0$;
 \item[(ii)] an $A/BC$ junction, for which $\Delta m=0$, $\Delta d=0$, and $\Delta e\ne0$;
 \item[(iii)] a \emph{residual-only node}, for which $\Delta m=\Delta d=\Delta e=0$.
\end{enumerate}
If $R$ is the set of residual-only nodes, then
\begin{equation}\label{eq:normalidentity}
 \cost(\mathcal T)=L+\sum_{v\in R}\abs{\Delta r_v}.
\end{equation}
\end{theorem}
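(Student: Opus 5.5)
The plan is to read the classification off the earlier lemmas and then add up costs class by class. By Lemma~\ref{lem:mzero}, every internal node has $\Delta m_v=0$. Lemma~\ref{lem:bcjunction} says $\Delta d_v\ne0$ holds exactly at $BC$ junctions, and that these have $\Delta e_v=0$; this gives class~(i). Every other node has $\Delta d_v=0$. Among those nodes we split on whether $\Delta e_v\ne0$.

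Suppose $\Delta e_v\ne0$. Nodes removed by the contraction lie inside balanced $BC$ blocks, where $e\equiv0$, so $v$ survives as an internal node of the contracted tree. Lemma~\ref{lem:abcjunction} then identifies it as an $A/BC$ junction with $\Delta d_v=0$, giving class~(ii). Conversely, an $A/BC$ junction has $\Delta e_v\ne0$, because its two children have $e=s$ and $e=0$; so it is not a $BC$ junction. The remaining nodes have $\Delta m=\Delta d=\Delta e=0$, which is class~(iii). The three classes are mutually exclusive because (i) and (ii)/(iii) are separated by $\Delta d$, and (ii) and (iii) are separated by $\Delta e$.

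For the identity~\eqref{eq:normalidentity}, we sum $\imb$ over the three classes. Lemma~\ref{lem:bcjunction} gives the exact total~\eqref{eq:totalbccost} for class~(i). Lemma~\ref{lem:abcjunction} gives the exact total~\eqref{eq:totalabccost} for class~(ii). For class~(iii), \eqref{eq:costafterM} reduces each cost to $\abs{\Delta r_v}$. Adding the two junction totals,
\[
 C_{BC}+C_{A/BC}=nQ^2+nQ+\sum_i a_i-2\sum_k c_k=L,
\]
where the $\sum_j b_j$ terms cancel and the $c_k$ terms combine. This yields~\eqref{eq:normalidentity}.

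The main point needing care is that the class~(ii) total in Lemma~\ref{lem:abcjunction} refers to costs computed in the contracted tree. Those costs must agree with the actual node costs in $\mathcal T$. I would check this directly. Contraction preserves subtree weights, so the imbalance of a surviving node is unchanged. Also, every node with $\Delta e\ne0$ survives the contraction, so no junction cost is lost or double-counted.
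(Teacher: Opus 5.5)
Your proposal is correct and follows the paper's proof essentially step for step. It takes $\Delta m=0$ from Lemma~\ref{lem:mzero} and class~(i) from Lemma~\ref{lem:bcjunction}; it notes that every node with $\Delta e\ne0$ survives the contraction (since $e=0$ inside balanced $BC$ blocks), gets class~(ii) from Lemma~\ref{lem:abcjunction}, and then uses $C_{BC}+C_{A/BC}=L$ with each residual-only node costing $\abs{\Delta r_v}$. Your added checks, that an $A/BC$ junction has $\Delta e=\pm s\ne0$ and that contraction preserves subtree weights so contracted-tree costs are the true costs, are details the paper leaves implicit.
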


\begin{proof}
Lemma~\ref{lem:mzero} gives $\Delta m=0$ everywhere.  If $\Delta d\ne0$, Lemma~\ref{lem:bcjunction} gives class~(i), including $\Delta e=0$.  Otherwise, if $\Delta e\ne0$, the node cannot lie inside a contracted $BC$ block because $e=0$ throughout such a block.  It therefore survives the contraction, and Lemma~\ref{lem:abcjunction} gives class~(ii).  If neither occurs, the node is in class~(iii).  The classes are therefore exhaustive and mutually exclusive.

By~\eqref{eq:totalbccost} and~\eqref{eq:totalabccost}, the total contribution of classes~(i) and~(ii) is
\begin{align*}
 C_{BC}+C_{A/BC}
 &=\left(nQ^2+\sum b-\sum c\right)
   +\left(nQ+\sum a-\sum b-\sum c\right)\\
 &=nQ^2+nQ+\sum a-2\sum c\\
 &=L.
\end{align*}
Every node in class~(iii) has cost exactly $\abs{\Delta r_v}$, proving~\eqref{eq:normalidentity}.
\qed
\end{proof}

\begin{corollary}\label{cor:reszero}
If $\cost(\mathcal T)\le L$, then $\cost(\mathcal T)=L$ and
\[
 \Delta r_v=0
 \qquad\text{for every residual-only node }v.
\]
\end{corollary}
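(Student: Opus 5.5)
This corollary follows directly from the accounting identity~\eqref{eq:normalidentity} of Theorem~\ref{thm:normalform}. The plan is to bound $\cost(\mathcal T)$ from both sides and then read off the equality.

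For the lower bound, every term $\abs{\Delta r_v}$ in the sum over the residual-only set $R$ is nonnegative. Hence~\eqref{eq:normalidentity} gives $\cost(\mathcal T)\ge L$.

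For the upper bound, we use the standing hypothesis $\cost(\mathcal T)\le L$. This hypothesis is exactly the one under which Lemmas~\ref{lem:mzero}, \ref{lem:dexact}, \ref{lem:eexact} and the normal-form theorem were derived, so the identity is available without any further assumption.

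Combining the two bounds yields $\cost(\mathcal T)=L$. Substituting this back into~\eqref{eq:normalidentity} gives
\[
\sum_{v\in R}\abs{\Delta r_v}=0.
\]
A finite sum of nonnegative terms vanishes only if each term vanishes, so $\Delta r_v=0$ for every $v\in R$.

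No step here is a genuine obstacle. The only point needing care is to confirm that Theorem~\ref{thm:normalform} is applied to the same fixed mobile $\mathcal T$ with $\cost(\mathcal T)\le L$, so that its class decomposition and its exact accounting of the costs $C_{BC}$ and $C_{A/BC}$ are valid. All the substantive work, namely the tightness of the $d$- and $e$-variation bounds, has already been done in those earlier lemmas.
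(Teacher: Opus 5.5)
Your proposal is correct and is the paper's own argument: nonnegativity of the terms in \eqref{eq:normalidentity} gives $\cost(\mathcal T)\ge L$, the standing hypothesis gives the reverse inequality, and a vanishing finite sum of nonnegative terms forces $\Delta r_v=0$ for every $v\in R$. The paper compresses this to ``immediate from \eqref{eq:normalidentity} and nonnegativity,'' which your write-up simply makes explicit.
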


\begin{proof}
Immediate from~\eqref{eq:normalidentity} and nonnegativity.
\qed
\end{proof}

We now use pairwise distinctness to collapse every cancellation block to a single triple.

\begin{lemma}[Unit-size cancellation gadgets]\label{lem:unitsize}
Every $BC$ junction and every $A/BC$ junction has size one.
\end{lemma}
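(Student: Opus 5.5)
The plan is to argue by contradiction, using the normal form of Theorem~\ref{thm:normalform}, Corollary~\ref{cor:reszero} and pairwise distinctness. The key observation is that inside a \emph{pure} block, meaning a subtree all of whose leaves have the same type, every internal node is residual-only. Its residual imbalance must therefore vanish, and a cherry of two distinct-weight leaves makes that impossible.

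First consider a $BC$ junction $v$ of size $s\ge2$, and let $u$ be its child containing exactly $s$ leaves of type $B^*$. Since $s\ge2$, $u$ is internal. Pick an internal node $x$ of $u$ of maximal depth; both children of $x$ are then leaves, so $x$ is a cherry of two $B^*$ leaves $B_j^*,B_{j'}^*$ with $j\ne j'$. More generally, at any internal node $y$ inside $u$ both children are pure $B^*$ blocks, and we check its class:
\begin{itemize}
\item Lemma~\ref{lem:mzero} gives $\Delta m_y=0$. Each $B^*$ leaf has $m=1$, so the two children contain equal numbers of leaves.
\item Each $B^*$ leaf has $d=1$, so equal leaf counts give $\Delta d_y=0$.
\item All leaves have $e=0$, so $\Delta e_y=0$.
\end{itemize}
Hence $y$ lies in class~(iii), and Corollary~\ref{cor:reszero} forces $\Delta r_y=0$. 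Applied to the cherry $x$, this gives $b_j=b_{j'}$, contradicting the distinctness of $B$ from Lemma~\ref{lem:normalize}. The $C^*$ side could be treated symmetrically, but one side already suffices. So every $BC$ junction has size one.

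Next consider an $A/BC$ junction of size $s\ge2$, and let $u$ be its child consisting of exactly $s$ leaves of type $A^*$. The same argument applies:
\begin{itemize}
\item Each $A^*$ leaf has $(m,d,e)=(2,1,1)$, so at every internal node of $u$ the condition $\Delta m=0$ forces equal leaf counts on the two sides.
\item Equal counts then give $\Delta d=\Delta e=0$, so the node is residual-only.
\end{itemize}
A deepest internal node of $u$ is a cherry $A_i^*,A_{i'}^*$ with $i\ne i'$. By Corollary~\ref{cor:reszero} this cherry would need $a_i=a_{i'}$, contradicting distinctness of $A$. Hence $s=1$. By definition, the other side then consists of balanced $BC$ blocks containing exactly one $B^*$ and one $C^*$ leaf in total, i.e.\ a single size-one $BC$ block.

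The step needing the most care is the claim that internal nodes of a pure block are genuinely residual-only, i.e.\ that they cannot be classified as junctions of type (i) or (ii). This holds because junctions require leaves of two different types on their two sides, whereas these nodes have a single type throughout. One also has to check that such nodes are well defined in the original tree $\mathcal T$, which is where the class partition of Theorem~\ref{thm:normalform} lives. This is immediate, since pure $A^*$ subtrees are untouched by the contraction and pure $B^*$ subtrees are ordinary subtrees of $\mathcal T$.
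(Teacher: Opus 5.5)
Your proof is correct and follows the paper's argument. In both, a cherry inside the pure $B^*$ (resp.\ $A^*$) child of a junction of size $s\ge2$ is residual-only, because its two leaves have identical $(m,d,e)$ coordinates, so Corollary~\ref{cor:reszero} would force $b_j=b_{j'}$ (resp.\ $a_i=a_{i'}$), contradicting distinctness. Your extra check that every internal node of a pure block is residual-only is valid but unnecessary, since only the cherry is used.
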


\begin{proof}
Suppose a $BC$ junction has size $s\ge2$.  Its pure-$B^*$ child is a full binary tree with at least two leaves, so it contains a cherry: an internal node whose children are two leaves $B_i^*,B_j^*$.  At their parent the two children have identical $(m,d,e)$ coordinates $(1,1,0)$, so that parent is residual-only.  Corollary~\ref{cor:reszero} therefore implies
\[
 b_i=b_j,
\]
contradicting distinctness within $B$.  Hence every $BC$ junction has size one.  The same argument using a cherry of two $A^*$ leaves, whose common high coordinates are $(2,1,1)$, shows that every $A/BC$ junction has size one.
\qed
\end{proof}

Consider a maximal balanced $BC$ block containing $s$ leaves of type $B^*$ and $s$ leaves of type $C^*$.  It has $\varepsilon=-s$ in the contracted tree.  Such a block belongs to an $A/BC$ junction of size $s$ by Lemma~\ref{lem:abcjunction}.  Since every $A/BC$ junction has size one, every maximal balanced $BC$ block contains exactly one $B^*$ leaf and one $C^*$ leaf.  Its size-one $A/BC$ junction then attaches exactly one $A^*$ leaf.  Therefore the $3n$ nonpadding source leaves are partitioned into $n$ disjoint three-leaf components, each containing exactly one $A_i^*$, one $B_j^*$, and one $C_k^*$.  The total weight of such a component is
\begin{equation}\label{eq:tripleweight}
 4Q^3+2Q^2+Q+s,
 \qquad s=a_i+b_j+c_k.
\end{equation}

It remains to show that every resulting triple has normalized sum $H$.

Contract each of the $n$ three-leaf components into one meta-leaf.  Together with the $q=P-n$ padding leaves, there are exactly $P$ meta-leaves.  Every meta-leaf has the same high coordinates
\[
 (m,d,e)=(4,2,1).
\]
The residual of a triple meta-leaf is its sum $s$ from~\eqref{eq:tripleweight}; the residual of a padding leaf is $H$.

\begin{lemma}[Equal-count/equal-sum trees]\label{lem:equalleaves}
Let a rooted full binary tree have a real number on each leaf.  Suppose that at every internal node (i) its two children contain the same number of leaves, and (ii) the sums of the leaf numbers in its two child subtrees are equal.  Then all leaves carry the same number.
\end{lemma}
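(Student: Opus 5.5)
We argue by strong induction on the number of leaves, proving the slightly stronger statement that every leaf carries the value $\sigma/N$, where $\sigma$ is the total of the leaf numbers and $N$ the number of leaves. For a single leaf there is nothing to prove. For a tree with at least two leaves, the root is internal with children $u,w$. Hypothesis (i) at the root gives that $u$ and $w$ each contain exactly $N/2$ leaves. Hypothesis (ii) at the root gives that their leaf sums are equal, hence each equals $\sigma/2$.

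Both hypotheses are stated node by node, so they hold at every internal node of the subtrees rooted at $u$ and $w$. These subtrees therefore satisfy the lemma's assumptions and have fewer leaves. By the induction hypothesis, every leaf below $u$ carries $(\sigma/2)/(N/2)=\sigma/N$, and the same computation applies below $w$. Hence all leaves carry $\sigma/N$.

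An alternative route avoids averages. One first shows that (i) at every node forces the tree to be perfect, since all root-to-leaf paths have length $\log_2 N$. Then each leaf value is exactly $2^{-\mathrm{depth}}\sigma$, obtained by halving the sum at each step down from the root. I would present the induction version because it is shorter.

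There is no real obstacle here. The only step needing care is the bookkeeping at the inductive step: hypothesis (i) is needed so that halving the sum and halving the leaf count happen together, which makes the average preserved in each child. Without (i), equal sums alone would not force equal leaves; two leaves of weight $1$ balanced against one leaf of weight $2$ is a counterexample. It is therefore worth noting explicitly where (i) is used.

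In the application, each meta-leaf's number is its residual ($s$ for a triple and $H$ for a padding leaf). Condition (i) holds because $\Delta m=0$ at every node and all $P$ meta-leaves have $m=4$, so equal mass means equal leaf counts. Condition (ii) comes from Corollary~\ref{cor:reszero}, since the nodes above the meta-leaves are residual-only.
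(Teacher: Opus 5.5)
Your proof is correct and follows essentially the same route as the paper: induction on the number of leaves, with the hypotheses inherited by the two child subtrees. The paper concludes by noting that the two children have common values $x$ and $y$, and that equal leaf counts plus equal sums force $x=y$; carrying the average $\sigma/N$ through the induction instead is only a cosmetic variation.
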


The proof is deferred to Appendix~\ref{app:equalleaves}.

\begin{lemma}\label{lem:triplesH}
Every three-leaf component obtained above satisfies
\[
 a_i+b_j+c_k=H.
\]
\end{lemma}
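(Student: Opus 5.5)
We apply Lemma~\ref{lem:equalleaves} to the contracted tree whose leaves are the $P$ meta-leaves (the $n$ three-leaf components and the $q$ padding leaves), labelling each meta-leaf by its residual: $s=a_i+b_j+c_k$ for a triple and $H$ for a padding leaf. The plan has three steps.

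\emph{Step 1: the contraction is well defined.} Each three-leaf component is a genuine subtree of $\mathcal T$. Its $BC$ junction is a subtree containing exactly one $B^*$ and one $C^*$ leaf, and its parent is the size-one $A/BC$ junction, whose other child is a single $A^*$ leaf. So the union of the three leaves is exactly the subtree rooted at that $A/BC$ junction. These subtrees are disjoint, and together with the padding leaves they cover all leaves. Contracting each of them therefore yields a rooted full binary tree with $P$ leaves. Every internal node of this contracted tree is an internal node of $\mathcal T$ lying strictly above the junctions.

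\emph{Step 2: the two hypotheses of Lemma~\ref{lem:equalleaves}.} Take a surviving internal node $v$. It is not a $BC$ junction, since those lie inside components. It is also not an $A/BC$ junction, since all of those are component roots, which were contracted. By Theorem~\ref{thm:normalform}, $v$ is therefore residual-only, and Corollary~\ref{cor:reszero} gives $\Delta r_v=0$. Because residuals are additive, the residual of a child subtree is the sum of the residual labels of its meta-leaves, so hypothesis~(ii) holds. For hypothesis~(i), Lemma~\ref{lem:mzero} gives $\Delta m_v=0$. Every meta-leaf has $m=4$, so the $m$-value of a child is $4$ times its number of meta-leaves, and equal $m$ means equal leaf counts.

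\emph{Step 3: conclusion.} Lemma~\ref{lem:equalleaves} now says all meta-leaf residuals are equal to some common value $\sigma$. There are two ways to identify $\sigma$. If $q\ge1$, some meta-leaf is a padding leaf, so $\sigma=H$ directly. In general, the residual totals give $n\sigma+qH=r_{\mathrm{tot}}=PH$, using~\eqref{eq:norm-total}. Hence $n\sigma=nH$, so $\sigma=H$, and this works even when $q=0$. Thus every triple satisfies $a_i+b_j+c_k=H$.

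The only delicate point is Step~1, checking that every surviving node really is residual-only. In particular, one must confirm that no $A/BC$ junction lies above the component roots. This is settled by Lemma~\ref{lem:abcjunction}: the junctions partition the $A^*$ leaves, and each has size one, so every $A/BC$ junction is the root of some component.
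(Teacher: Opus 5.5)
Your proposal is correct and follows the paper's proof. Surviving nodes are residual-only, so Corollary~\ref{cor:reszero} and Lemma~\ref{lem:mzero} (with every meta-leaf having $m=4$) supply both hypotheses of Lemma~\ref{lem:equalleaves}, and the total residual $PH$ from~\eqref{eq:norm-total} forces the common value to be $H$. Your extra check that each component is a genuine subtree rooted at a size-one $A/BC$ junction, so that no junction survives the contraction, only makes explicit what the paper states in one line.
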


\begin{proof}
Every internal node remaining after the contraction is residual-only.  Hence Corollary~\ref{cor:reszero} says that its two child residual sums are equal.  Furthermore, Lemma~\ref{lem:mzero} gives equal $m$-mass on its two sides.  Since every meta-leaf has $m=4$, the two sides therefore contain the same number of meta-leaves.  Lemma~\ref{lem:equalleaves} implies that all $P$ meta-leaf residuals are equal.

Their total residual is, by~\eqref{eq:norm-total},
\[
 \sum_i a_i+\sum_jb_j+\sum_kc_k+qH
 =nH+qH=PH.
\]
Thus the common residual is $H$.  In particular, every triple meta-leaf has residual $a_i+b_j+c_k=H$.
\qed
\end{proof}

\begin{lemma}[Soundness]\label{lem:soundness}
If the constructed \UBM{} instance has a mobile of cost at most $L$, then the source \DN{} instance is a yes-instance.
\end{lemma}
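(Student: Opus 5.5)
The soundness statement is essentially an assembly of the structural results already established, so the plan is to chain them. Fix a mobile $\mathcal T$ of cost at most $L$. Lemma~\ref{lem:mzero} gives $\Delta m=0$ at every node. Lemmas~\ref{lem:dexact}--\ref{lem:abcjunction} then give the normal form of Theorem~\ref{thm:normalform}. Corollary~\ref{cor:reszero} then forces zero residual imbalance at every residual-only node.

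Next I would invoke Lemma~\ref{lem:unitsize}, which says every $BC$ junction and every $A/BC$ junction has size one. By the discussion following it, the $3n$ non-padding leaves are therefore partitioned into $n$ three-leaf components. Each component contains exactly one $A_{i}^*$, one $B_{j}^*$ and one $C_{k}^*$. Lemma~\ref{lem:triplesH} shows each such component satisfies $a_i+b_j+c_k=H$. By the normalization equivalence~\eqref{eq:norm-equiv} of Lemma~\ref{lem:normalize}, this is equivalent to $\alpha_i+\beta_j+\gamma_k=t$.

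It remains to check that these triples form a valid partition of $A^0\cup B^0\cup C^0$. Each leaf is created from a distinct source element, and the components are disjoint and cover all $A^*,B^*,C^*$ leaves. Hence each source integer appears in exactly one triple, and each triple takes one element from each of $A^0,B^0,C^0$. Pairwise distinctness also guarantees that the leaf-to-element correspondence is unambiguous. This yields the required numerical matching.

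The main obstacle is not in this assembly but in the earlier lemmas; here the only point needing care is that the "component" construction genuinely partitions the leaves. Specifically, the single $A^*$ leaf attached at each size-one $A/BC$ junction must be distinct across junctions. I would derive this from the partition statement in Lemma~\ref{lem:abcjunction}, which says the $A/BC$ junctions partition all $A^*$ leaves and all balanced $BC$ blocks.
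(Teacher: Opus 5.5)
Your proposal is correct and takes the same route as the paper. Both proofs use Lemma~\ref{lem:unitsize} and the discussion after it to partition the $3n$ source leaves into $ABC$ triples, then Lemma~\ref{lem:triplesH} to get sum $H$, then \eqref{eq:norm-equiv} to recover sum $t$; the disjointness of the attached $A^*$ leaves does come from the partition statement of Lemma~\ref{lem:abcjunction}, as you say (your appeal to pairwise distinctness for the leaf-to-element correspondence is unnecessary, since each leaf is created from an indexed source element by construction).
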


\begin{proof}
By Lemma~\ref{lem:unitsize}, the source leaves are partitioned into $n$ triples containing exactly one $A$-, one $B$-, and one $C$-element.  Lemma~\ref{lem:triplesH} shows that each normalized triple sums to $H$.  By~\eqref{eq:norm-equiv}, the corresponding original triple sums to $t$.  Since every source element occurs exactly once, these triples form a valid numerical three-dimensional matching.
\qed
\end{proof}

\section{Strong NP-completeness}\label{sec:hardness}

\begin{proof}[Theorem~\ref{thm:main}]
Lemmas~\ref{lem:yes} and~\ref{lem:soundness} show that the source instance is
a yes-instance if and only if the constructed \UBM{} instance admits a mobile
of cost at most $L$.  The construction has polynomial size and, on the
strongly bounded source family, all generated numbers remain polynomially
bounded; the calculation is deferred to Appendix~\ref{app:size}.  Hence the
reduction proves strong NP-hardness.

Finally, \UBM{} is in NP.  A certificate is a rooted full binary tree with the
input weights assigned to its leaves.  It has $2N-1$ nodes, and all subtree
weights, imbalances, and the total cost can be computed bottom-up using
polynomially many bit operations.  Therefore \UBM{} is strongly NP-complete.
\qed
\end{proof}

\section{Conclusion}\label{sec:conclusion}

We have shown that unrestricted Balanced Mobiles is strongly NP-complete,
settling the complexity question left open by the original formulation.  The
unrestricted setting is substantially different from the prescribed-topology
variant: the reduction cannot assume that designated weights remain in fixed
local gadgets.  Instead, the proof derives the required topology from the
numerical instance itself.  Separated scales first eliminate structural
imbalance, and two tight potential inequalities then force every
threshold-achieving solution into a canonical hierarchy.  Distinct source
integers finally collapse the hierarchy to the triples of a numerical
three-dimensional matching.

The normal-form technique may be useful for other optimization problems on
weighted trees in which both the topology and the placement of weights are
part of the solution.  For Balanced Mobiles itself, natural next questions
include approximation guarantees, parameterized algorithms, and tractable
weight classes that exploit the relation with Huffman-like constructions
identified in the original work~\cite{HamoudiEtAl2015}.

\paragraph{Acknowledgments.}
The authors used generative AI, namely ChatGPT, to obtain these results. The model found the full solution of the problem and wrote a first draft. The model was also used to subsequently edit the paper into the final version presented here. The authors verified the results and assume full responsibility.

\clearpage
\appendix
\renewcommand{\theHsection}{appendix.\Alph{section}}
\section{Deferred proofs and estimates}\label{app:deferred}
\noindent\textit{The 14-page main paper, including the references, ends above. The material below is supplementary and contains deferred proofs and routine estimates.}

\subsection{Normalization}\label{app:normalization}
\begin{proof}[Lemma~\ref{lem:normalize}]
The equivalence~\eqref{eq:norm-equiv} follows by cancelling the common offset
$15M$, and~\eqref{eq:norm-total} follows from the total-sum condition of the
source instance.  Since every original number is smaller than $U$,
\[
 b_j-c_k>3M-U>0,
 \qquad
 a_i-b_j-c_k>5M-2U>0.
\]
The bounds in~\eqref{eq:lessH} follow from $M=3U+1$ and $t>0$.
Finally, adding a fixed offset within each of $A^0,B^0,C^0$ preserves
pairwise distinctness.
\qed
\end{proof}

\subsection{Scale separation and coordinate variation}\label{app:estimates}
\begin{proof}[Lemma~\ref{lem:scale}]
Because $P,H\ge1$, we have $3P^2+P\le4P^2$ and
$Q=100P^3(H+1)>100P^3H$, which gives~\eqref{eq:scale1}.  Also $Q>H$ and
$Q>8P^2$, hence
\[
 (3P^2+P)(Q+H)<4P^2(2Q)=8P^2Q<Q^2,
\]
proving~\eqref{eq:scale2}.  Finally $Q>8P$ and $Q>H$, so
\[
 3PQ^2+2PQ+2PH<7PQ^2<Q^3,
\]
which gives~\eqref{eq:scale3}.  In particular, $Q>PH$.
\qed
\end{proof}

\begin{proof}[Lemma~\ref{lem:variation}]
For every internal node $v$,
\[
 |\Delta x_v|\le x(v_L)+x(v_R)=x(v).
\]
Expanding $\sum_vx(v)$ leaf by leaf, each leaf contributes once for each of
its internal ancestors.  Every leaf has at most $N-1<3P$ internal ancestors
because $N\le3P$.  Thus, when $X>0$,
\[
 \sum_v|\Delta x_v|\le\sum_vx(v)\le(N-1)X<3PX.
\]
If $X=0$, nonnegativity forces $x$ to vanish everywhere.  Applying the
positive case to $e_{\rm tot}=P$ and $r_{\rm tot}=PH$ yields
\eqref{eq:variation-er}.
\qed
\end{proof}

\subsection{Equal-count/equal-sum trees}\label{app:equalleaves}
\begin{proof}[Lemma~\ref{lem:equalleaves}]
We induct on the number of leaves.  The one-leaf case is immediate.  Let the
root be internal.  Each of its two child subtrees again satisfies both
hypotheses, so by induction all leaves in the left child have a common value
$x$ and all leaves in the right child have a common value $y$.  The two
children contain the same positive number of leaves and have the same total
leaf sum.  Hence $x=y$, and all leaves of the tree have the same value.
\qed
\end{proof}

\subsection{Proofs for the $Q$-scale argument}\label{app:qscale}

\begin{proof}[Lemma~\ref{lem:eexact}]
The contracted leaves have total positive $\varepsilon$-mass $n$, contributed
by the $A^*$ leaves, and total negative mass $-n$, contributed by the balanced
$BC$ blocks.  Lemma~\ref{lem:telescoping} and~\eqref{eq:eps-e} give
$\sum_v|\Delta e_v|\ge n$.  This quantity is integral.  If it were at least
$n+1$, then Lemma~\ref{lem:bcjunction} and Lemma~\ref{lem:variation} would give
\begin{align*}
 \cost(\mathcal T)
 &\ge C_{BC}+Q\sum_v|\Delta e_v|-\sum_v|\Delta r_v|\\
 &>C_{BC}+(n+1)Q-3P^2H.
\end{align*}
By~\eqref{eq:scale1}, $Q-3P^2H>PH$, whereas~\eqref{eq:lessH} gives
$\sum_i a_i-\sum_jb_j-\sum_kc_k<PH$.  Thus
\[
 \cost(\mathcal T)>C_{BC}+nQ+
 \sum_i a_i-\sum_jb_j-\sum_kc_k=L,
\]
a contradiction.  Hence $\sum_v|\Delta e_v|=n$.  Equality in the telescoping
bound then yields~\eqref{eq:epsequal} at every internal node of the contracted
tree.
\qed
\end{proof}

\begin{proof}[Lemma~\ref{lem:epspurity}]
Induct on the number of contracted leaves.  The leaf case follows from
\eqref{eq:epsleaves}.  If $S$ is internal and $\varepsilon(S)>0$, let
$x=\varepsilon(S_L)$ and $y=\varepsilon(S_R)$.  By~\eqref{eq:epsequal},
$|x|=|y|$, while $x+y>0$ forces $x=y>0$.  Both children are therefore pure
$A^*$ by induction.  The negative case is symmetric.
\qed
\end{proof}

\begin{proof}[Lemma~\ref{lem:abcjunction}]
By~\eqref{eq:eps-e}, $\Delta e_v\ne0$ exactly when the two child
$\varepsilon$-values differ.  Equality~\eqref{eq:epsequal} then forces those
values to be $s$ and $-s$ for some integer $s\ge1$.  Lemma~\ref{lem:epspurity}
shows that the positive side contains exactly $s$ $A^*$ leaves, while the
negative side consists of balanced $BC$ blocks containing altogether $s$
$B^*$ and $s$ $C^*$ leaves.  Thus the node is an $A/BC$ junction; the converse
is immediate.  Both sides have total $d=s$, so $\Delta d_v=0$.

The first-zero-ancestor argument from Lemma~\ref{lem:bcjunction}, now applied
to $\varepsilon$ in the contracted tree, shows that these junctions are
disjoint and partition all $A^*$ leaves and all balanced $BC$ blocks.

At a size-$s$ junction the $Q^3$ and $Q^2$ terms cancel.  Moreover,
\eqref{eq:agtbc} implies
\[
 \sum_{A\text{ side}}a_i>
 \sum_{BC\text{ side}}b_j+\sum_{BC\text{ side}}c_k,
\]
so the cost is
\[
 sQ+\sum_{A\text{ side}}a_i-
 \sum_{BC\text{ side}}b_j-\sum_{BC\text{ side}}c_k.
\]
Summing over the partition gives~\eqref{eq:totalabccost}.
\qed
\end{proof}

\subsection{Size of the reduction}\label{app:size}
The source problem is strongly NP-hard~\cite{MITHardness2024}, so hardness
holds on a family in which every source integer is polynomially bounded in
$n$.  Then $U,M,H$ are polynomially bounded.  Since $P<2n$ for $n>1$,
$Q=100P^3(H+1)$ is also polynomially bounded.  The largest constructed weight
is $O(Q^3)$ and the instance has $N=2n+P=O(n)$ leaves.  Finally,
\[
 L=nQ^2+nQ+\sum_i a_i-2\sum_kc_k=O(nQ^2+nH),
\]
so the threshold is polynomially bounded.  Thus the construction is a
polynomial-time reduction that preserves strong NP-hardness.

\end{document}